\documentclass[runningheads]{llncs}

\usepackage[T1]{fontenc}
\usepackage[utf8]{inputenc}
\usepackage{amsmath,amssymb,mathtools}
\usepackage{graphicx}
\usepackage{booktabs}
\usepackage{xspace}
\usepackage{url}
\usepackage{tikz}
\usepackage{subcaption}

\newcommand{\Oh}{\ensuremath{\mathcal{O}}}
\newcommand{\whp}{w.h.p.\xspace}
\newcommand{\lce}{\ensuremath{\mathsf{lce}}}

\newcommand{\makestring}{\ensuremath{\mathsf{make\_string}}}
\newcommand{\splitop}{\ensuremath{\mathsf{split}}}
\newcommand{\concatop}{\ensuremath{\mathsf{concat}}}

\newcommand{\equalop}{\ensuremath{\mathsf{equal}}}
\newcommand{\access}{\ensuremath{\mathsf{access}}}

\newcommand{\fp}{\ensuremath{\mathsf{fp}}}
\newcommand{\pow}{\ensuremath{\mathsf{pow}}}
\newcommand{\size}{\ensuremath{\mathsf{size}}}
\newcommand{\height}{\ensuremath{\mathsf{height}}}

\newcommand{\val}{\ensuremath{\mathsf{val}}}

\newcommand\textal[1]{\ensuremath{\mathsf{#1}}}

\newcommand\ul[1]{\underline{#1}}

\usepackage{orcidlink}
\providecommand{\orcidID}[1]{}
\renewcommand{\orcidID}[1]{\orcidlink{#1}}

\begin{document}

\title{Fully Persistent Dynamic LCE \\ via AVL Trees and AVL Grammars}
\titlerunning{Fully Persistent Dynamic LCE}

\author{Taiki Kaneda\inst{1} \and Hiroki Arimura\inst{1}\orcidID{0000-0002-2701-0271} \and Shunsuke Inenaga\inst{2}\orcidID{0000-0002-1833-010X}}
\authorrunning{T. Kaneda et al.}

\institute{Graduate School of IST, Hokkaido University, Sapporo, Japan\\
\email{kaneda.taiki.c9@elms.hokudai.ac.jp, arim@ist.hokudai.ac.jp}
\and
Department of Informatics, Kyushu University, Fukuoka, Japan\\
\email{inenaga.shunsuke.380@m.kyushu-u.ac.jp}}

\maketitle

\begin{abstract}
We study fully persistent dynamic strings with equality and longest common extension (LCE) queries. Straightforward full persistence is problematic for the splay-based FeST structure, since the same 
unbalanced past version
can be reused indefinitely and the usual amortized analysis no longer applies. We give a fully persistent dynamic LCE structure, called \textal{FeAVL}, based on path copying over AVL trees. 
For an operation involving string(s) of total length $n$, it supports split, concatenate, and single-character updates in worst-case $\Oh(\log n)$ time, equality in worst-case $\Oh(\log n)$ time \whp, and LCE in worst-case $\Oh(\log n+\log^2\ell)$ time \whp, where $\ell$ is the answer; each update creates only $\Oh(\log n)$ new permanent nodes. We also give a grammar-compressed instantiation via AVL grammars: starting from an initial grammar of size $g_0$, after $U$ updates, the total number of permanent grammar nodes is $\Oh(g_0+I+U\log n_{\max})$, where $I$ is the number of inserted fresh characters and $n_{\max}$ is the maximum string length appearing during the update sequence.
\end{abstract}


\section{Introduction}

\textit{Dynamic strings} support insertion, deletion, substitution, splitting, and concatenation.
A fundamental query on such strings is the longest common extension (LCE) query: given two positions, possibly in two different strings, return the length of the longest common prefix of the two corresponding suffixes.
Dynamic LCE and dynamic string equality are basic primitives in dynamic text processing, versioned documents, genome editing, and dynamic indexing.

The dynamic string problem has a long history, from practical \textit{rope data structures}~\cite{boehm1995ropes} to the \textit{optimal structure} of Gawrychowski, Karczmarz, Kociumaka, {\L}{\k{a}}cki, and Sankowski~\cite{gawrychowski2018optimal}, which achieves logarithmic updates and constant-time equality and LCE queries \whp using locally consistent parsing~\cite{sahinalp1995data}.
In contrast, Lipt\'{a}k, Masillo, and Navarro proposed \textal{FeST} 
(a \textit{\ul{f}orest of \ul{e}nhanced \ul{s}play \ul{t}rees}) 
as a simple textbook solution~\cite{liptak2026textbook}.
\textal{FeST} supports splits, concatenations, equality, and LCE queries with logarithmic or nearly logarithmic amortized bounds.

This paper focuses on making this simple paradigm fully persistent.
A \textit{fully persistent data structure} allows updates on any version, producing a new branch of the version history.
Throughout this paper, $n$ denotes the total length of the string(s) involved in the operation under consideration.
For a sequence of $U$ updates, we write $n_{\max}=\max_{0 \leq i \leq U} n_u$, 
where $n_u$ is the corresponding value of $n$ for the $u$-th update.

Amortized structures like splay trees~\cite{sleator1985self} lose efficiency in full persistence because poorly balanced versions can be repeatedly queried.
Therefore, we propose \textal{FeAVL} (a \textit{\ul{f}orest of \ul{e}nhanced \ul{AVL} trees}), replacing self-adjusting trees with worst-case balanced implicit-key trees such as AVL trees~\cite{adelson1962avl} augmented with sizes and fingerprints.
Path copying gives full persistence because rank searches, splits, concatenations, and rebalancing affect only $\Oh(\log n)$ nodes


The choice of AVL trees also has a compressed counterpart: AVL grammars, introduced by Rytter for grammar-based compression~\cite{rytter2003application}.
In the resulting fully persistent data structure, called the \textit{\textal{FeAVL} grammar}, an LZ77 factorization of size $z_0$ for an initial string of length $n_0$ can be transformed into an AVL grammar of size $\Oh(z_0\log n_0)$.
By choosing the smaller of this grammar and the uncompressed AVL tree, we may assume an initial size of $g_0=\Oh(\min\{n_0,z_0\log n_0\})$.
For subsequent updates, we only rely on the fact that split and concatenation operations function correctly for arbitrary AVL grammars.
Table~\ref{tab:intro-comparison} summarizes the two layers of the result,
and our contributions are summarized as follows:


\begin{table}[t!]
\centering
\caption{Comparison of the AVL-tree version and the AVL-grammar version. $I$~is the number of inserted fresh characters.}
\label{tab:intro-comparison}
\footnotesize
\resizebox{\textwidth}{!}{%
\begin{tabular}{lll}
\toprule
Measure & AVL-tree version & AVL-grammar version \\
\midrule
Initial size & $\Oh(n_0)$ & $\Oh(g_0)$ \\
Best initial choice & $\Oh(n_0)$ & $\Oh(\min\{n_0,z_0\log n_0\})$ \\
Split/Concatenate time & $\Oh(\log n)$ & $\Oh(\log n)$ \\
Update time & $\Oh(\log n)$ & $\Oh(\log n)$ \\
Equality time & $\Oh(\log n)$, \whp & $\Oh(\log n)$, \whp \\
LCE time & $\Oh(\log n+\log^2\ell)$, \whp & $\Oh(\log n+\log^2\ell)$, \whp \\
New permanent objects/update & $\Oh(\log n)$ nodes & $\Oh(\log n)$ symbols \\
Total size after $U$ updates & $\Oh(n_0+I+U\log n_{\max})$ & $\Oh(g_0+I+U\log n_{\max})$ \\
\bottomrule
\end{tabular}%
}
\end{table}

\begin{enumerate}
\item We give an AVL-tree version of \textal{FeST}, called \textal{FeAVL}, with worst-case rather than amortized tree bounds: equality in $\Oh(\log n)$ time \whp and LCE in $\Oh(\log n+\log^2 \ell)$ time \whp, where $\ell$ is the answer.
\item We make the structure fully persistent by path copying. Every update can be issued from any previous version and creates only $\Oh(\log n)$ new permanent nodes.
\item We isolate the obstruction to a straightforward fully persistent splay-based \textal{FeST}: path copying may copy a linear-height access path, and full persistence allows the same bad old version to be reused repeatedly.
\item We give a grammar-compressed instantiation using AVL grammars. Starting from an initial grammar of size $g_0$, after $U$ updates, the structure occupies $\Oh(g_0+I+U\log n_{\max})$ permanent grammar nodes.
\end{enumerate}

\section{Preliminaries}\label{sec:prelim}

\subsection{Strings and Dynamic LCE}

Let $\Sigma$ be an alphabet.
An element of $\Sigma^*$ is a string.
The length of a string $s$ is denoted by $|s|$.
For $1\le i\le j\le |s|$, $s[i..j]$ denotes the substring $s[i]\cdots s[j]$
that starts at position $i$ and ends at position $j$ in $s$.
Let $s[i..] = s[i..|s|]$ and $s[..j] = s[1..j]$.
For strings $s$ and $t$, $st$ denotes their concatenation, of length $|s|+|t|$.
Given a position $i$ in a string $s$ and a position $j$ in a string $t$,
the \emph{longest common extension} (\emph{LCE}) query is to answer
\[
  \lce(s,i,t,j)=\max\{\ell\ge 0 : s[i..i+\ell-1]=t[j..j+\ell-1]\},
\]
which is the longest common prefix (LCP) length of $s[i..]$ and $t[j..]$.

We maintain a collection of string objects.
The following operations are used throughout the paper.
\begin{itemize}
\item $\makestring(w)$: create a new string object storing the static string $w$.
\item $\splitop(s,i)$: split string $s$ into $s[1..i]$ and $s[i+1..]$.
\item $\concatop(s,t)$: concatenate two strings $s$ and $t$, producing $st$.
\item $\access(s,i)$: return character $s[i]$.
\item $\equalop(s,i,t,j,\ell)$: test whether $s[i..i+\ell-1]=t[j..j+\ell-1]$.
\item $\lce(s,i,t,j)$: return the LCE length of strings $s[i..]$ and $t[j..]$.
\end{itemize}
Single-character insertion, deletion, substitution, extraction, and introduction of substrings are obtained from these primitives in the standard way.

Unless stated otherwise, for an operation on one or two strings we write $n$ for the total length of the string or strings involved in that operation.
Thus, for $\lce(s,i,t,j)$, we use $n=|s|+|t|$ if $s$ and $t$ are distinct string objects, and $n=|s|$ if the two positions belong to the same string.
For space bounds over a sequence of updates, $n_u$ denotes the corresponding value of $n$ for the $u$-th update, and $n_{\max}$ denotes the maximum string length involved over the whole history.
Throughout the paper, logarithmic terms in asymptotic bounds are understood to be at least one; equivalently, $\log x$ means $\max\{1,\log_2 x\}$ inside $\Oh(\cdot)$-bounds.

\subsection{Karp--Rabin Fingerprints}

We use Karp--Rabin fingerprints~\cite{karp1987rabin}.
Let $p$ be a prime and let $b\in\{1,\ldots,p-1\}$ be chosen uniformly at random.
For a string $x=x[1]\cdots x[m]$, define
\[
  \kappa(x)=\sum_{r=0}^{m-1} x[m-r] b^r \bmod p .
\]
For two strings $x$ and $y$, let $\kappa(xy)=(\kappa(x)b^{|y|}+\kappa(y))\bmod p$.
We store with each represented substring both its fingerprint and the value $b^{|x|}\bmod p$.
As usual, by choosing $p$ sufficiently large, all equality answers are correct \whp over any polynomial-length sequence of operations.
The data structure itself is deterministic except for this fingerprinting choice.

\subsection{AVL Trees as Implicit-Key Sequence Trees}

An AVL tree~\cite{adelson1962avl} is a binary tree in which, for every internal node, the heights of the two children differ by at most one (see~\cite{adelson1962avl,mehlhorn2008algorithms}).
We use AVL trees as implicit-key sequence trees: the in-order traversal spells the represented string, and ranks are in-order positions computed from subtree sizes~\cite{mehlhorn2008algorithms}.
The standard rank search, split, concatenate, and rebalancing operations visit and modify only $\Oh(\log n)$ nodes in the worst case.

We shall use the following property of AVL trees.
\begin{definition}[Logarithmic update-path property]\label{def:logpath}
A sequence tree family has the logarithmic update-path property if every tree containing $m$ nodes has height $\Oh(\log m)$, and rank-based search, split, concatenate, and rebalancing modify only $\Oh(\log m)$ nodes and run in $\Oh(\log m)$ worst-case time.
\end{definition}
AVL trees satisfy Definition~\ref{def:logpath}.
The same uncompressed construction extends to other joinable balanced trees satisfying the logarithmic update-path property.

\subsection{AVL Grammars}

A straight-line program, or grammar, is a directed acyclic graph of rules in which each rule is either a terminal rule $X\to a$ or a binary rule $X\to YZ$.
Each symbol $X$ derives a unique string, denoted $\val(X)$.
The size of the grammar is the number of symbols.
For a terminal rule we set $h(X)=0$, and for a binary rule $X\to YZ$ we set $h(X)=1+\max\{h(Y),h(Z)\}$.

\begin{definition}[AVL grammar]
A grammar is an AVL grammar if $|h(Y)-h(Z)|\le 1$ for every binary rule $X\to YZ$.
\end{definition}

An AVL grammar is therefore a DAG representation of a string whose unfolded derivation tree is an AVL tree.
The graph itself need not be a tree because the same symbol may be shared by many parents.
Rytter introduced AVL grammars for grammar-based compression and proved the following result.

\begin{theorem}[Rytter~\cite{rytter2003application}]\label{thm:rytter}
Given an LZ77 factorization of size $z_0$ for a string $s$ of length $n_0$, one can construct an AVL grammar of size $\Oh(z_0\log n_0)$ deriving the string $s$.
\end{theorem}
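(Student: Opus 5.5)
We would follow Rytter's original construction. It processes the LZ77 factorization $f_1 f_2\cdots f_{z_0}$ of $s$ from left to right. The invariant is that after handling $f_1\cdots f_k$, we hold an AVL grammar $G_k$ and a symbol $S_k$ with $\val(S_k)=f_1\cdots f_k$, and $|G_k|=\Oh(k\log n_0)$. The basic tool is an AVL concatenation lemma. Given symbols $X,Y$ of an AVL grammar, we build a symbol $Z$ with $\val(Z)=\val(X)\val(Y)$ by adding only $\Oh(|h(X)-h(Y)|+1)$ new symbols, and $h(Z)\le\max\{h(X),h(Y)\}+1$.

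We prove the concatenation lemma by mimicking AVL-tree join. Suppose $h(X)\ge h(Y)$. We descend the right spine of $X$ until we reach a symbol $X'$ with $h(X')\in\{h(Y),h(Y)+1\}$. We create a new symbol $X'\!Y\to X'Y$. Then we rebuild the spine upward, creating a fresh copy of each ancestor on the spine; the old symbols are never modified, so sharing remains valid. At each level where the balance is violated, we apply a single or double rotation. Each rotation creates $\Oh(1)$ new symbols. As in AVL trees, the height grows by at most one. The cost is proportional to the spine length descended, which is $\Oh(h(X)-h(Y)+1)$.

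Next we need a substring lemma. For a symbol $X$ and an interval $[i..j]$, we want a symbol deriving $\val(X)[i..j]$ using $\Oh(h(X))$ new symbols. We descend from $X$ to the split node of the interval, then follow the left boundary path and the right boundary path. The maximal subtrees hanging off each path are existing symbols, and their heights are monotone along the path. Concatenating them with the concatenation lemma in order of increasing height telescopes the costs $|h-h'|+1$ to $\Oh(h(X))$ per side. Joining the two sides costs $\Oh(h(X))$ more. Since AVL heights are $\Oh(\log n_0)$, this step is $\Oh(\log n_0)$.

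Now consider a factor $f_{k+1}$. If it is a fresh character $a$, we add the terminal $a$ (or reuse it) and concatenate it to $S_k$. Otherwise $f_{k+1}$ equals $s[p..p+|f_{k+1}|-1]$ with $p+|f_{k+1}|-1\le|f_1\cdots f_k|$; we assume the non-self-referential LZ77 variant. In that case we extract the substring symbol from $S_k$ by the substring lemma and concatenate it to $S_k$. Each factor therefore adds $\Oh(\log n_0)$ symbols, and summing over $z_0$ factors gives $\Oh(z_0\log n_0)$. By the concatenation lemma the AVL property holds at every step.

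The main obstacle is the self-referential (overlapping) LZ77 variant, where the source overlaps the factor itself. The plan is to treat the source as a periodic string $u^{e}u'$, where $u$ is already derivable. A symbol for $u^{e}$ can be built by repeated squaring: about $\log e$ doublings, each a balanced concatenation costing $\Oh(1)$ symbols, plus $\Oh(\log n_0)$ more to combine the results. The remainder $u'$ is handled by one substring extraction. This keeps the per-factor cost at $\Oh(\log n_0)$. Alternatively, we could cite Rytter's reduction of the overlapping case to the non-overlapping one.
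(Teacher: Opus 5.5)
Your proposal is correct and follows essentially the same route as the paper. The paper gives no proof of its own; it cites Rytter, and Sec.~\ref{sec:grammar} sketches the same decomposition of the source interval into existing hanging symbols with bitonic heights, joined at telescoping cost and then concatenated onto the current prefix symbol. Your periodic-squaring treatment of self-referential factors goes slightly beyond what the paper discusses, and the telescoping step should state explicitly that the symbol accumulated below a path node $u$ has height at most $h(u)$, hence at most one more than the next hanging piece.
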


In this paper, Theorem~\ref{thm:rytter} is used only to obtain a small initial representation.
All subsequent dynamic and persistent operations require only that the current representation is an AVL grammar.

\subsection{Persistence}\label{sec:persistence}

A partially persistent structure allows queries on old versions, but updates only on the newest version.
A fully persistent structure allows both queries and updates on any old version.
The version graph is therefore a tree; each update derives from a single existing version.
We use the standard path-copying method~\cite{driscoll1989persistent}: when an update changes a node, a fresh copy of that node is created, and the change is propagated through copied ancestors up to the root.
Old nodes are never overwritten.

\section{An AVL-Tree Data Structure for Dynamic LCE}\label{sec:dynamic}

In this section, we propose our fully persistent data structure, called \textal{FeAVL} 
(a \textit{\ul{f}orest of \ul{e}nhanced \ul{AVL} trees}).  
In our structure, 
each string is stored in one AVL tree.
The in-order traversal of the tree spells the string.
For a node $v$, let $L(v)$ and $R(v)$ be its left and right children.
The node $v$ stores five values $v.\mathsf{char}$, $v.\size$, $v.\height$, $v.\fp$, and $v.\pow$,
where $v.\size$ is the number of nodes in the subtree rooted at $v$, $v.\height$ is the AVL height, $v.\fp$ is the fingerprint of the substring represented by the subtree rooted at $v$, and $v.\pow=b^{v.\size}\bmod p$,
where $b$ is the Karp--Rabin base fixed in Sec.~\ref{sec:prelim}.
Null children have size 0, fingerprint 0, and power 1.
Each time $v$ obtains a new child, its fields are updated in constant time:
\begin{align*}
  v.\size &= \size(L(v)) + \size(R(v)) + 1,\\
  v.\fp &= ((\fp(L(v))\cdot b + v.\mathsf{char})\cdot \pow(R(v)) + \fp(R(v))) \bmod p,\\
  v.\pow &= \pow(L(v))\cdot b\cdot \pow(R(v)) \bmod p.
\end{align*}
The height field is updated in the standard AVL way.
All rotations and joins call this constant-time update routine on the nodes whose children changed.


\subsection{Basic Operations}

\paragraph{Construction.}
$\makestring(w)$ builds a balanced tree whose in-order labels are the characters of $w$.
A perfectly balanced tree can be constructed in $\Oh(|w|)$ time, and all auxiliary fields are computed by a post-order traversal.

\paragraph{Access.}
$\access(s,i)$ follows the usual rank search using subtree sizes.
Because the tree has logarithmic height, the time is $\Oh(\log |s|)$.

\paragraph{Split and concatenate.}
$\splitop(s,i)$ applies rank-based split to the tree representing $s$ and returns two roots.
$\concatop(s,t)$ applies AVL concatenation to the two roots.
Both operations preserve all auxiliary fields by recomputing them on the modified nodes.
By Definition~\ref{def:logpath}, each operation takes $\Oh(\log |st|)$ worst-case time.

\begin{lemma}\label{lem:splitconcat}
In the enhanced AVL tree, rank split and concatenate take $\Oh(\log n)$ worst-case time and modify $\Oh(\log n)$ nodes, where $n$ is the total number of nodes in the input tree or trees.
\end{lemma}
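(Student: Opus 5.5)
The plan is to reduce the lemma to the logarithmic update-path property of Definition~\ref{def:logpath}, which AVL trees satisfy. The only extra point is that the enhanced fields $\size$, $\fp$, and $\pow$ can be maintained without increasing the asymptotic cost.

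First, I will recall the standard AVL join. To form $\concatop(T_1,T_2)$, detach an extreme node $x$ (for instance, the maximum of $T_1$) and call $\mathrm{join}(T_1',x,T_2)$. This join descends along the right spine of the taller tree until it reaches a subtree whose height differs by at most one from the other tree. It attaches $x$ there and walks back up, applying at most a constant number of rotations at each level. Every node whose child pointer changes lies on this spine or on the path used to extract $x$. Since heights are $\Oh(\log n)$, only $\Oh(\log n)$ nodes are affected, and the total work is $\Oh(\log n)$ (in fact $\Oh(|h_1-h_2|+1)$ plus the extraction cost).

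Next, rank split. I will follow the rank-search path from the root to position $i$, using the $\size$ fields. Along the way, the subtrees hanging to the left of the path are collected into the left result and those to the right into the right result, each combination being a join with the path node as the middle key. The main obstacle is the time bound. A naive count gives $\Oh(\log n)$ joins of $\Oh(\log n)$ cost each, hence $\Oh(\log^2 n)$. I will apply the standard telescoping argument instead: joins on each side are performed bottom-up, so the heights of the accumulated trees are nondecreasing, and the cost of each join is $\Oh(|h_{\text{new}}-h_{\text{acc}}|+1)$. Summed over one side, these differences telescope to at most the root height plus the path length, which is $\Oh(\log n)$. Therefore split takes $\Oh(\log n)$ time and touches $\Oh(\log n)$ nodes. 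In this paper one can simply invoke Definition~\ref{def:logpath} for this fact, but I would sketch the telescoping argument for completeness.

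Finally, I will handle the auxiliary fields. Each field of $v$ is a constant-time function of $v.\mathsf{char}$ and of the fields of $L(v)$ and $R(v)$, given by the update formulas above; correctness of $\fp$ follows from the concatenation identity $\kappa(xy)=\kappa(x)b^{|y|}+\kappa(y)$ applied to $\val(L(v))\cdot v.\mathsf{char}\cdot\val(R(v))$. A node's fields can change only if its children change, so it suffices to recompute the nodes modified by rotations or joins, processing them bottom-up so that children are always current before their parents. This adds $\Oh(1)$ per modified node, so both the $\Oh(\log n)$ time bound and the $\Oh(\log n)$ bound on modified nodes are preserved. Unmodified subtrees keep their fields valid because their contents are unchanged.
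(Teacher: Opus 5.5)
Your proposal is correct and takes essentially the same route as the paper. The paper's proof simply cites the standard worst-case $\Oh(\log n)$ bounds for AVL rank search, split, join, and rebalancing, and notes that $\size$, $\fp$, and $\pow$ are recomputed in $\Oh(1)$ time per modified node. Your explicit join and telescoping-split sketches just unpack those standard facts. Two small precision points: the join descends the left spine of $T_2$ when $T_2$ is the taller tree. And the split telescoping needs each accumulated tree's height to stay within $\Oh(1)$ of the height of the path node just processed, not merely that accumulated heights are nondecreasing.
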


\begin{proof}
The underlying AVL operation performs rank search, split, concatenate, and rebalancing in $\Oh(\log n)$ worst-case time and modifies only nodes on logarithmic search and rebalancing paths.
The added information is recomputed in constant time for each modified node.
\qed
\end{proof}

\subsection{Substring Fingerprints and Equality}

To compute the fingerprint of substring $s[i..j]$, we temporarily split the tree into three parts
$s[..i-1]$, $s[i..j]$, and $s[j+1..]$.
The middle root stores the fingerprint of $s[i..j]$.
In the non-persistent structure the three parts can be concatenated back.

\begin{lemma}\label{lem:rangefp}
The fingerprint of any substring of a represented string $s$ can be obtained in $\Oh(\log |s|)$ worst-case time.
\end{lemma}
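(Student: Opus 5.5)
The plan follows the construction sketched just before the statement: isolate $s[i..j]$ by two rank splits and read the fingerprint stored at the middle root. First, apply $\splitop$ at position $i-1$ to the tree of $s$, producing roots $A$ for $s[..i-1]$ and $B$ for $s[i..]$. Then apply $\splitop$ to $B$ at position $j-i+1$, producing $M$ for $s[i..j]$ and $C$ for $s[j+1..]$. The degenerate cases $i=1$ or $j=|s|$ simply produce empty trees and need no special treatment. Each split is done on a tree with at most $|s|$ nodes. By Lemma~\ref{lem:splitconcat}, each split therefore takes $\Oh(\log|s|)$ worst-case time and touches only $\Oh(\log|s|)$ nodes.

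Next, I argue correctness of the returned value, and this is the step needing care. The invariant to maintain is that every node $v$ stores $v.\fp=\kappa(\text{in-order string of the subtree of }v)$ and $v.\pow=b^{v.\size}\bmod p$. I would first check that the update formulas preserve this. If $L(v)$ spells $x$ and $R(v)$ spells $y$, then the subtree of $v$ spells $x\,c\,y$ with $c=v.\mathsf{char}$. Applying the concatenation rule $\kappa(xy)=(\kappa(x)b^{|y|}+\kappa(y))\bmod p$ twice gives
\[
\kappa(xcy)=\bigl((\kappa(x)b+c)\,b^{|y|}+\kappa(y)\bigr)\bmod p,
\]
which is exactly the stored formula. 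The $\pow$ formula is $b^{|x|+1+|y|}$, and the null-child conventions (fingerprint $0$, power $1$) match the empty string. Since split, concatenate and rotations recompute these fields bottom-up on every node whose children change, and untouched subtrees keep valid fields, the invariant holds for $A$, $M$ and $C$. Consequently $M.\fp=\kappa(s[i..j])$.

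Finally, in the non-persistent structure, concatenate $A$, $M$, $C$ back, again in $\Oh(\log|s|)$ time by Lemma~\ref{lem:splitconcat}, so that $s$ is restored as a string. The result is a constant number of $\Oh(\log|s|)$ operations. The main obstacle is confirming that no stale auxiliary field survives the rebalancing inside split and concatenate. This reduces to the observation already made in the proof of Lemma~\ref{lem:splitconcat}: only the $\Oh(\log|s|)$ nodes on search and rebalancing paths change children, and each is recomputed in constant time in bottom-up order.

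As an alternative that avoids modifying the tree at all, which is useful for the persistent setting, one can compute the fingerprint of a prefix $s[..k]$ directly during a single rank search. Walk from the root toward position $k$, and each time the walk goes right past a node $v$, fold in $\fp(L(v))$, $v.\mathsf{char}$ and the corresponding power. Then $\kappa(s[i..j])$ follows from the prefix values as $\kappa(s[..j])-\kappa(s[..i-1])\,b^{j-i+1}\bmod p$. The power $b^{j-i+1}$ can be accumulated in the same walk along with the prefix fingerprint, so this also runs in $\Oh(\log|s|)$ time by the height bound of Definition~\ref{def:logpath}.
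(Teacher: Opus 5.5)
Your main argument is the paper's proof: two rank splits isolate $s[i..j]$, its fingerprint is read off the middle root, and the $\Oh(\log|s|)$ bound follows from Lemma~\ref{lem:splitconcat}; your check that the stored $\fp$/$\pow$ update rule agrees with $\kappa$ is a detail the paper leaves implicit. In your read-only alternative, the two prefix walks naturally accumulate $b^{j}$ and $b^{i-1}$, not $b^{j-i+1}$, so you need either to collect $b^{j-i+1}$ from the parts of the walks below the node where they branch, or to compute it by repeated squaring in $\Oh(\log|s|)$ multiplications; with that fix the variant also works.
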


\begin{proof}
Two rank splits isolate the substring, and the fingerprint is stored at the root of the isolated tree.
The cost follows from Lemma~\ref{lem:splitconcat}.
\qed
\end{proof}

\begin{corollary}\label{cor:equal}
By comparing the range fingerprints of $s[i..i+\ell-1]$ and $t[j..j+\ell-1]$ from Lemma~\ref{lem:rangefp}, $\equalop(s,i,t,j,\ell)$ is answered in $\Oh(\log n)$ worst-case time and is correct \whp
\end{corollary}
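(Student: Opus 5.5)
The corollary follows almost directly from Lemma~\ref{lem:rangefp} once the random fingerprint is accounted for. I would handle the running time and the correctness separately.

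\emph{Running time.} First check the boundary case. If $i+\ell-1>|s|$ or $j+\ell-1>|t|$, the answer is ``false'', which is detected in constant time from the root sizes; if $\ell=0$, the answer is ``true''. Otherwise, apply Lemma~\ref{lem:rangefp} to $s$ with the range $[i,i+\ell-1]$, and again to $t$ with the range $[j,j+\ell-1]$. The first call costs $\Oh(\log|s|)$ and the second costs $\Oh(\log|t|)$, and both are $\Oh(\log n)$. When $s$ and $t$ are the same string object, the two calls act on the same tree. In the non-persistent setting, each call concatenates its three parts back, so the tree again has $|s|$ nodes and the second call costs $\Oh(\log n)$ as well. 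For the persistent setting, the temporary splits are performed by path copying on scratch nodes, so the original version is never changed and the same bound holds. Comparing the two returned fingerprints takes constant time, so the total is $\Oh(\log n)$ in the worst case. The bound is deterministic because the randomness affects only correctness.

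\emph{Correctness.} If the two substrings are equal, their fingerprints are equal, so the answer ``true'' is always correct; errors are one-sided. If they differ, let $x=s[i..i+\ell-1]$ and $y=t[j..j+\ell-1]$. The difference $\kappa(x)-\kappa(y)$ is a nonzero polynomial in $b$ of degree less than $\ell\le n$ over $\mathbb{F}_p$. It therefore has at most $n-1$ roots, so a collision occurs with probability at most $n/(p-1)$. Choose $p\ge n_{\max}^{c+2}$, or polynomial in the total length of the operation sequence. A union bound over a polynomial number of queries then makes every answer correct with probability $1-1/\mathrm{poly}$, which is the \whp claim from Sec.~\ref{sec:prelim}.

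\emph{Main obstacle.} The step needing the most care is that the stored root fingerprint equals $\kappa$ of the isolated substring. This requires the update formulas for $v.\fp$ and $v.\pow$ to be maintained as an invariant through every rotation, split, and join. I would prove it by induction over the constant-time recomputation routine. The induction uses the concatenation identity $\kappa(xy)=\kappa(x)b^{|y|}+\kappa(y)$ applied to $\val(L(v))\cdot v.\mathsf{char}\cdot\val(R(v))$.
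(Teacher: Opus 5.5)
Your proposal is correct and follows the paper's own argument: two applications of Lemma~\ref{lem:rangefp} give the $\Oh(\log n)$ worst-case bound, and correctness \whp is the standard Karp--Rabin collision argument from Sec.~\ref{sec:prelim}, which the paper invokes without spelling it out. Your extra details are exactly the facts the paper takes for granted: the one-sided error, the root-counting bound $(\ell-1)/(p-1)$ (which implicitly needs character codes to be distinct modulo $p$), the union bound over a polynomial-length operation sequence, and the fingerprint invariant maintained by the node-update formula.
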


\subsection{LCE Queries}\label{subsec:lce}

The LCE query follows the LCP algorithm of Lipt\'{a}k
et al.~\cite{liptak2026textbook}.
Their algorithm uses substring equality tests together with temporary
extraction and reinsertion of windows.
It first obtains a crude upper bound on the answer by testing candidate
lengths that grow by repeated squaring, and then finishes the search inside
localized windows.
To avoid an extra factor depending on the global length when the answer is
small, it applies this localization idea twice, the ``hitting twice'' strategy
of \textal{\textal{FeST}}.

The same algorithm applies to our enhanced AVL trees.
Indeed, the primitives used in the \textal{\textal{FeST}} LCP algorithm---substring equality,
split, concatenate, extract, and introduce---are supported by Lemmas~\ref{lem:splitconcat}
and~\ref{lem:rangefp} in worst-case logarithmic time.
Thus the amortized analysis of Lipt\'{a}k et al.\ becomes a worst-case analysis
in our setting.

\begin{lemma}\label{lem:lce}
An LCE query with answer $\ell$ can be answered in
$\Oh(\log n+\log^2\ell)$ worst-case time and is correct \whp
\end{lemma}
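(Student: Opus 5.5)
We prove Lemma~\ref{lem:lce} by spelling out a concrete ``hitting twice'' procedure built on the primitives of Lemmas~\ref{lem:splitconcat} and~\ref{lem:rangefp}, and by charging every step against either $\log n$ or $\log^2\ell$. The key structural fact is this. Once the two suffixes are cut into windows $W_s$ and $W_t$ of length $m$, every subsequent split, fingerprint query or concatenation on those windows costs $\Oh(\log m)$, not $\Oh(\log n)$. This follows because Lemma~\ref{lem:splitconcat} bounds costs in terms of the size of the trees actually involved.

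\textbf{Phase 1 (global localization).} We test $\equalop(s,i,t,j,k)$ for $k=2^{2^c}$, $c=0,1,2,\dots$, clipped to the remaining lengths $\min\{|s|-i+1,|t|-j+1\}$. We stop at the first $k$ for which the test fails, or for which $k$ reaches the clip. Each test costs $\Oh(\log n)$ by Corollary~\ref{cor:equal}, and the number of tests is $\Oh(\log\log\ell)$. Charging this naively gives $\Oh(\log n\log\log\ell)$, which is too much. So we run only the \emph{first} test, with $k=2$, on the full trees. We then extract windows of length $K_1=\min\{\text{clip},\,2^{2^{c_1}}\}$ with $K_1$ polynomial in $\ell$ by the following step. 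After one $\Oh(\log n)$ extraction of a window of a guessed size, we continue squaring \emph{inside} the window. When the window itself is exhausted without a mismatch, we reinsert it and re-extract a window whose length is the square of the previous one.

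Because the window lengths grow doubly exponentially, the extraction costs of successive rounds are $\Oh(\log n)$ each. We need to limit the number of rounds that touch the full tree. For that we first do a single $\Oh(\log n)$ cut at the first failing squared length $K=2^{2^c}\le\ell^2$, found by performing the squaring tests on windows whose size is itself squared. Each round $c$ costs $\Oh(\log n + 2^c)$, and I expect a careful accounting to give $\Oh(\log n)+\sum_c \Oh(2^c)=\Oh(\log n+\log\ell)$. This is the delicate point: I would follow Lipt\'ak et al.\ exactly. Their first hit uses the global tree only $\Oh(1)$ times, and all remaining squaring happens inside a window of size $\le \ell^2$, which costs $\Oh(\log \ell)$ per test.

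\textbf{Phase 2 (second hit and binary search).} With windows of length $K\le\ell^2$ known to contain the first mismatch, we binary search on the LCE length. There are $\Oh(\log K)=\Oh(\log\ell)$ steps, and each is an equality test on trees of size $\Oh(K)$, so each costs $\Oh(\log K)=\Oh(\log\ell)$. This gives $\Oh(\log^2\ell)$. Finally we concatenate the windows back, at cost $\Oh(\log n)$; in the persistent setting we instead discard the temporary copies. Because AVL bounds are worst-case, no potential argument is needed: every operation's cost is bounded individually, and summing gives $\Oh(\log n+\log^2\ell)$.

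\textbf{Correctness.} Each of the $\Oh(\log n+\log^2\ell)$ fingerprint comparisons is correct \whp, and a union bound over polynomially many comparisons keeps the overall answer correct \whp. The main obstacle is Phase~1: we must ensure that the doubly exponential search touches the size-$n$ trees only $\Oh(1)$ times. Otherwise a $\log n\log\log\ell$ term appears.
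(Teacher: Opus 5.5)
\textbf{Gap in Phase~1.} Your Phase~2 and your correctness argument are fine. So is the observation that operations on an extracted window of length $m$ cost $\Oh(\log m)$; that is exactly the proviso the paper's proof rests on. The gap is Phase~1, which carries the whole difficulty.

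\begin{itemize}
\item As you describe it, whenever the current window is exhausted you reinsert it and re-extract a squared window from the full trees. So the size-$n$ trees are touched once per squaring round, i.e.\ $\Theta(\log\log\ell)$ times.
\item Your accounting (each round costs $\Oh(\log n+2^c)$, summed to $\Oh(\log n)+\sum_c\Oh(2^c)$) charges $\log n$ only once. The correct sum is $\Oh(\log n\log\log\ell+\log\ell)$, which is precisely the term you wanted to eliminate.
\item The proposed remedy, a single $\Oh(\log n)$ cut at the first failing squared length $K\le\ell^2$, is circular. Locating $K$ \emph{is} the squaring search whose cost is at issue. Likewise, a window of length polynomial in $\ell$ (your $K_1$) cannot be fixed before that search has revealed something about $\ell$.
\end{itemize}
So the invariant you attribute to Lipt\'ak et al.---global trees touched $\Oh(1)$ times, all squaring inside a window of size at most $\ell^2$---is not delivered by your procedure, and the claimed bound is not established.

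\textbf{The missing idea.} Use a first window whose length depends only on $n$, and let the global search continue only when $\ell$ is large enough to pay for it. For instance, let $L=2^{\lceil\sqrt{\log n}\rceil}$ (clipped to the remaining length) and test $\equalop(s,i,t,j,L)$ on the full trees.

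\begin{itemize}
\item If the test fails, extract windows of length $L$ in $\Oh(\log n)$ time and square inside them. This takes $\Oh(\log\log L)=\Oh(\log\log n)$ tests at $\Oh(\log L)=\Oh(\sqrt{\log n})$ each, so $\Oh(\log n)$ in total. It yields $K$ with $\ell<K\le\max\{2,\ell^2\}$. A second extraction of length-$K$ sub-windows costs $\Oh(\log L)$, and your Phase~2 then finishes in $\Oh(\log^2\ell)$.
\item If the test succeeds, keep squaring on the full trees with lengths $L^2,L^4,\dots$. Writing $\log\ell=t\sqrt{\log n}$ with $t\ge1$, this takes $\Oh(1+\log t)$ tests. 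The time is $\Oh(\log n(1+\log t))\subseteq\Oh(\log n+t^2\log n)=\Oh(\log n+\log^2\ell)$. After that, one extraction of length $K\le\ell^2$ and Phase~2 finish as before.
\end{itemize}
This is one way to realize the two-fold localization the paper attributes to Lipt\'ak et al.

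\textbf{Comparison with the paper.} The paper does not reprove any of this. It applies Lipt\'ak et al.'s analysis as a black box, noting only that the analysis needs equality, split and concatenate in time logarithmic in the size of the tree actually operated on. Lemmas~\ref{lem:splitconcat} and~\ref{lem:rangefp} supply these in the worst case. If you want to defer, defer in that form. If you want a self-contained proof, Phase~1 must be replaced by an argument like the one above.
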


\begin{proof}
Apply the LCP algorithm of Lipt\'{a}k et al.\ to the two isolated suffixes.
Their analysis gives $\Oh(\log n+\log^2\ell)$ provided substring equality and the temporary window operations run in time logarithmic in the relevant tree size.
In our setting these are the range fingerprints of Lemma~\ref{lem:rangefp} and the split and concatenate of Lemma~\ref{lem:splitconcat}, which meet this bound in the worst case rather than amortized, so the same bound holds in the worst case.
The only possible error is a Karp--Rabin collision.
\qed
\end{proof}

\begin{theorem}\label{thm:dynamic}
There is a dynamic string data structure using space linear in the total length of the represented strings that supports $\makestring(w)$ in $\Oh(|w|)$ worst-case time, split and concatenate in $\Oh(\log n)$ worst-case time, substring equality in $\Oh(\log n)$ worst-case time \whp, and LCE in $\Oh(\log n+\log^2\ell)$ worst-case time \whp, where $\ell$ is the LCE answer.
\end{theorem}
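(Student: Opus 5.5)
The theorem collects the guarantees already established for the enhanced AVL tree, so the plan is to check each claimed operation against the earlier lemmas and then verify the space bound. The data structure is \textal{FeAVL} as defined at the start of Section~\ref{sec:dynamic}: a forest with one AVL tree per represented string, whose nodes carry the five fields $\mathsf{char}$, $\size$, $\height$, $\fp$, $\pow$. Since it has exactly one node per character and each node holds $\Oh(1)$ words, the space is linear in the total length of the represented strings. The one point to check is that temporary structures do not accumulate. In the non-persistent setting, the splits used for range fingerprints and for the LCE windows are undone by concatenations. Each such split or concatenation only rearranges existing nodes and rewrites $\Oh(1)$ fields per modified node, so no extra permanent space arises.

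Next I would go through the operations one by one. For $\makestring(w)$, I would build the perfectly balanced tree on $w$ recursively, taking the middle character as the root. This tree satisfies the AVL condition because the two subtree sizes differ by at most one, so their heights differ by at most one. A post-order pass then fills in $\size$, $\height$, $\fp$ and $\pow$ using the constant-time update formulas, giving $\Oh(|w|)$ total time. Split and concatenate follow directly from Lemma~\ref{lem:splitconcat}. Substring equality follows from Corollary~\ref{cor:equal}, with correctness \whp from the choice of $p$ in the Karp--Rabin setup. LCE follows from Lemma~\ref{lem:lce}.

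The main obstacle, though a mild one, is the correctness of the maintained fingerprints under rotations and joins. I would argue by induction on subtree structure. Assume the children's $\fp$ and $\pow$ values are correct. The update formula computes $\kappa(\val(L)\,c\,\val(R))$ via the concatenation identity $\kappa(xy)=(\kappa(x)b^{|y|}+\kappa(y))\bmod p$, applied twice. It likewise gives $\pow=b^{\size}$. Because every AVL primitive calls this routine bottom-up on each node whose children changed, all invariants hold after each operation. The per-operation time bounds then hold in the worst case, since AVL height is $\Oh(\log n)$ deterministically. Finally, a union bound over polynomially many fingerprint comparisons gives the \whp guarantee for the entire operation sequence.
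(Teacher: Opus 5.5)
Your proposal is correct and takes essentially the same route as the paper. The paper gives no separate proof; it obtains the theorem by combining the $\makestring$ construction paragraph with Lemma~\ref{lem:splitconcat}, Corollary~\ref{cor:equal}, and Lemma~\ref{lem:lce}. Your extra checks only make explicit what the paper leaves implicit: that the middle-rooted build is AVL, that the $\fp$ and $\pow$ update formulas are correct by the Karp--Rabin concatenation identity, and that temporary splits are undone so space stays linear.
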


\section{Full Persistence by Path Copying}\label{sec:persistent}

We now make the structure fully persistent in the sense of Sec.~\ref{sec:persistence}.
A version is a set of root pointers, one per string object; an operation acts on strings of a given version and returns a new version, modifying no node reachable from an old version.

\subsection{Path-Copied AVL Trees}

We path-copy every primitive: a node whose child pointer or auxiliary field would change is copied, and untouched subtrees are shared with the old version.

\begin{lemma}\label{lem:pathcopy}
A path-copied split or concatenate on an enhanced AVL tree runs in $\Oh(\log n)$ worst-case time and creates $\Oh(\log n)$ new nodes, where $n$ is the number of nodes in the involved tree or trees.
\end{lemma}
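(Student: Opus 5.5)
The plan is to reduce the statement to Lemma~\ref{lem:splitconcat} (equivalently, to the logarithmic update-path property of Definition~\ref{def:logpath}) via a simulation argument. The ephemeral AVL split and concatenate touch only $\Oh(\log n)$ nodes. Path copying adds, for each touched node, at most one fresh copy, plus constant work per copy. Concretely, I would view the ephemeral algorithm as a sequence of primitive steps of three kinds: reading a node, changing a child pointer, and recomputing auxiliary fields (size, height, $\fp$, $\pow$). I would then define the persistent simulation. The first time a node $v$ of the old version would be written, we allocate a copy $v'$ with the same fields and redirect the write to $v'$. Every later write to $v$ in the same operation goes to $v'$, and the parent pointer that referred to $v$ is itself treated as a write to the parent. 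Nodes never written are shared.

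The key steps in order are as follows. First, I would argue that the set $W$ of nodes written by the ephemeral operation has size $\Oh(\log n)$. By Lemma~\ref{lem:splitconcat} the modified nodes lie on $\Oh(1)$ root-to-leaf search paths plus rebalancing paths, each of length $\Oh(\log n)$ by the height bound. Second, I would show that the set of copied nodes is exactly $W$ closed under taking ancestors within the trees of the input version. Since AVL nodes store no parent pointers, changing a child of $v$ forces a copy of every ancestor of $v$ up to the root. Third, I would bound this closure: the ancestors of nodes on $\Oh(1)$ root-to-leaf paths are the nodes on those paths, so the closure still has size $\Oh(\log n)$. Finally, each copy costs $\Oh(1)$, and recomputing fields uses the constant-time update routine, so the running time stays $\Oh(\log n)$. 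Old nodes are never overwritten, so every previous version remains intact.

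The main obstacle I expect is rebalancing and, for split, the sequence of joins. In a rank split, the tree is cut along the search path and the pieces are re-joined bottom-up; each join descends along a spine of one piece and performs rotations. Here I would use the standard telescoping bound: the join of trees of heights $h_1 \le h_2$ costs $\Oh(h_2-h_1+1)$, and these differences sum to $\Oh(\log n)$ over the split. Hence the total number of nodes touched, and thus copied, across all joins is $\Oh(\log n)$. The nodes touched during those joins are either fresh nodes created earlier in the same operation, which can be mutated in place without copying, or old nodes on a spine, which are copied once each. Rotations rearrange only $\Oh(1)$ nodes already on the path. Making this distinction between fresh and old nodes explicit is what keeps the count from inflating to $\Oh(\log^2 n)$.
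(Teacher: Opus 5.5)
Your proposal is correct and follows the paper's route: reduce to Lemma~\ref{lem:splitconcat} and charge one fresh node, with $\Oh(1)$ field recomputation, to each modified node; your ancestor-closure and telescoping-join arguments make explicit what the paper leaves inside that lemma. One small correction: the fresh-versus-old distinction is not what prevents an $\Oh(\log^2 n)$ count, because the telescoping bound already limits the total number of nodes touched across all joins to $\Oh(\log n)$, so even re-copying fresh nodes would stay within $\Oh(\log n)$.
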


\begin{proof}
By Lemma~\ref{lem:splitconcat} the non-persistent operation modifies $\Oh(\log n)$ nodes; path copying creates one fresh node per modified node, recomputing its fields in $\Oh(1)$ and reusing the rest.
\qed
\end{proof}

\subsection{Persistent Updates and Queries}

Updates such as insertion, deletion, substitution, split, concatenate, extract, and introduce are implemented as in the non-persistent structure with path-copied primitives, so each creates a new version while preserving the input.

Equality and LCE queries are non-destructive: a read-only fingerprint traversal adds no permanent nodes, and any temporary path-copied splits used instead are discarded after the query.

\begin{theorem}\label{thm:persistent}
There is a fully persistent dynamic LCE data structure with the same query and update time bounds as Theorem~\ref{thm:dynamic}.
Each update creates $\Oh(\log n)$ new permanent nodes, where $n$ is the total length of the strings involved in the update.
If queries are implemented by temporary splits, an equality query creates $\Oh(\log n)$ temporary nodes and an LCE query creates $\Oh(\log n+\log^2\ell)$ temporary nodes; these nodes need not be retained as part of the persistent structure.
\end{theorem}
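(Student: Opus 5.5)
The plan is to obtain Theorem~\ref{thm:persistent} by running the structure of Theorem~\ref{thm:dynamic} with every primitive path-copied, and then checking three things: persistence is correct, the time bounds carry over, and the number of created nodes is as claimed. Everything reduces to Lemma~\ref{lem:pathcopy}.

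\textbf{Correctness of full persistence.} A version is a tuple of root pointers. Every operation replaces a modified node by a fresh copy and shares untouched subtrees, so no node reachable from an existing version is ever overwritten. An update may therefore be issued on any version, and every old version still spells its string. The auxiliary fields of a shared subtree depend only on that subtree, so they stay valid in every version that contains it. Fingerprints are the only random component, and the same base $b$ is used throughout. The \whp guarantee then follows by a union bound over the polynomially many equality tests made across the whole version tree, exactly as in the non-persistent case.

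\textbf{Updates and permanent space.} Insertion, deletion, substitution, extract and introduce are each a constant number of splits and concatenations, plus $\Oh(1)$ fresh singleton trees for inserted characters. Every intermediate tree has at most $n+\Oh(1)$ nodes. By Lemma~\ref{lem:pathcopy}, each of these primitives costs $\Oh(\log n)$ time and creates $\Oh(\log n)$ nodes. Some created nodes end up unreachable from the returned version; the count is an upper bound either way.

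\textbf{Queries.} For equality, Lemma~\ref{lem:rangefp} is executed with path-copied splits, giving $\Oh(\log n)$ time and $\Oh(\log n)$ new nodes. Instead of concatenating back, we simply discard the pieces and keep the input roots, which Lemma~\ref{lem:pathcopy} leaves intact. An LCE query runs the algorithm of Lemma~\ref{lem:lce}, in which every temporary window operation is one such path-copied split or concatenate. The number of nodes created is then at most a constant times the running time, namely $\Oh(\log n+\log^2\ell)$, since each node creation costs $\Oh(1)$. None of these nodes is referenced by any version root, so they can be dropped.

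\textbf{Main obstacle.} The delicate point is the LCE bound. The analysis of Lipt\'{a}k et al.\ charges window operations to the size of the \emph{windows} rather than to $n$, and this is what yields the $\log^2\ell$ term. I would verify that in the persistent setting each window is extracted by path-copied splits on trees of size $\Oh(\ell)$, or else of size $n$ only $\Oh(1)$ times. I would also check that reinsertion is unnecessary, since old roots survive; this if anything removes work. Once this is confirmed, the time bound of Lemma~\ref{lem:lce} and the matching node bound follow.
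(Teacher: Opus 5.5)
Your proposal is correct and follows the paper's proof. Like the paper, you run the algorithms of Sec.~\ref{sec:dynamic} with path-copied primitives, take the time and permanent-node bounds from Lemma~\ref{lem:pathcopy}, and argue correctness from the immutability of nodes reachable from old roots; your points that auxiliary fields depend only on their subtree and that temporary nodes are bounded by running time spell out what the paper leaves implicit. The ``main obstacle'' you flag needs no extra verification: a path-copied split or concatenate on a tree of $m$ nodes costs $\Oh(\log m)$, the same as in the ephemeral setting, so the argument of Lemma~\ref{lem:lce} carries over unchanged.
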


\begin{proof}
The algorithms are those of Sec.~\ref{sec:dynamic} with path-copied primitives, adding $\Oh(1)$ time per modified node; the permanent space bound follows from Lemma~\ref{lem:pathcopy}.
Correctness is by induction on the version graph: every new root is built from copied nodes and old immutable subtrees, and old roots are never touched.
\qed
\end{proof}

\paragraph{Space accounting.}
Let $U$ be the number of updates, let $n_0$ be the total length of initially created strings, and let $I$ be the total number of explicitly inserted fresh characters over all versions.
Let $n_u$ be the total length of the strings involved in update $u$.
The total number of permanent nodes is $\Oh(n_0+I+U \log n_{\max})$.
This form is preferable to a bound stated only in terms of the current total length, because fully persistent structures retain nodes belonging to old branches.

\subsection{Why Splay-Based \textal{FeST} Is Problematic}\label{sec:splay}

\textal{FeST} uses splay trees~\cite{sleator1985splay}, which give amortized logarithmic bounds in the ephemeral setting. This amortization is not robust under a straightforward path-copying implementation.

\begin{proposition}\label{prop:splaybad}
For every $n$, there is a version of a splay tree storing $n$ characters and a position such that a path-copying access to that position takes $\Omega(n)$ time and creates $\Omega(n)$ new nodes.
Moreover, under full persistence, the same version can serve as the input to $q$ independent accesses, giving $\Omega(qn)$ total time and space.
\end{proposition}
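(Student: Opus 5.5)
The plan is to exhibit an explicitly reachable degenerate splay tree and then exploit the fact that full persistence never lets its shape improve. First I construct the bad version. Start from the empty tree (or any tree built by $\makestring$) and insert the characters one at a time at the end, each time splaying the newly inserted node to the root. After every insertion the new maximum is the root and the previous tree is its left subtree. The result is a left path of length $n$: the root holds position $n$, its left child holds position $n-1$, and so on down to position $1$ at depth $n-1$. This is the standard sequential-insertion example for splay trees. A sequence of $\concatop$ operations with single-character strings gives the same shape under the usual splay-based join, which splays the maximum of the left tree and attaches the right tree. Alternatively, I can say that every binary tree shape is a valid splay tree, because splay trees carry no balance invariant, and just exhibit the path directly. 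I will still give the explicit construction, so that the version is reachable through the interface.

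Second, I analyze one path-copying access to position $1$ in this version. The splay operation has to first locate the node by rank search from the root. That search follows $n-1$ left pointers, which gives $\Omega(n)$ time. Splaying then applies $\Theta(n)$ rotations (zig-zig steps) along the whole access path, and every node on that path has its child pointers or its size and fingerprint fields changed. Under path copying (Sec.~\ref{sec:persistence}), each modified node must be freshly copied, because the old version stays reachable and may not be overwritten. So $\Omega(n)$ new nodes are created. I will argue this precisely: each zig-zig step changes the children of its grandparent and parent nodes, so every node on the original path except possibly $O(1)$ of them is modified. Even if one only counted the copied access path before splaying, its length is already $n$.

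Third, I handle repetition. Path copying leaves the input version untouched, so the old root pointer still refers to the same left path of depth $n-1$. Under full persistence each of $q$ accesses can be issued against this same old version, producing $q$ different branches. Each access faces exactly the same tree, and the restructuring done by earlier accesses lives only in their own new versions. Hence each access again costs $\Omega(n)$ time and $\Omega(n)$ fresh nodes, for $\Omega(qn)$ in total. I will note explicitly why the amortized analysis fails. The potential function $\Phi=\sum\log \size$ of the reused version is never decreased, because each expensive access "spends" potential that belongs only to its own branch. So the potential drop can no longer be charged against the repeated accesses.

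The main obstacle is the rigor of the node-count claim, namely that $\Omega(n)$ nodes are genuinely modified rather than just visited. I would settle it with the zig-zig accounting above, or more simply by noting that all $n$ nodes on the path lie on the access path. In a path-copying implementation the whole access path is copied so that parent pointers can reach the new root, which gives $\Omega(n)$ copies at once.
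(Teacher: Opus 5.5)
Your proposal is correct and follows the paper's argument: build a left-path version, access its deepest position so that the entire $\Theta(n)$-length path is rotated and copied, and reuse the same immutable version for $q$ independent branches. The only difference is how you reach the path. The paper accesses all positions in increasing rank order and cites the sequential access theorem for its $\Oh(n)$ ephemeral cost, whereas your append-at-the-end construction costs $\Oh(1)$ per step and is equally valid (your $\concatop$ variant actually leaves position $n$ as a right child of the root, which is harmless).
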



\begin{proof}
Accessing the positions in increasing rank order brings each accessed node to the root, leaving a left path with the smallest rank at the bottom; in the ephemeral structure this access sequence runs in $\Oh(n)$ total time~\cite{tarjan1985sequential}, so the path is an ordinary configuration. Take it as the target version.
An access to the smallest rank visits $\Theta(n)$ nodes and splays along the same path, so path copying creates $\Omega(n)$ new nodes in $\Omega(n)$ time.
Under full persistence the same version serves as the input to $q$ independent accesses, giving $\Omega(qn)$.
\qed
\end{proof}

The proposition rules out only the straightforward path-copied \textal{FeST}: for partial persistence, where updates follow a single evolving tree, the standard amortized analysis still applies.
Under full persistence, a worst-case balanced tree is the natural replacement.

\section{AVL-Grammar-Compressed Version}\label{sec:grammar}

We present a compressed version of a fully persistent data structure, 
the \textal{FeAVL} grammar, 
based on 
grammar-based compression~\cite{rytter2003application}.
It replaces the AVL tree of Sec.~\ref{sec:dynamic} by an AVL grammar DAG.
Each nonterminal $X$ stores the fields $|\val(X)|$, $h(X)$, $\fp(X)$, and $\pow(X)$, 
which 
are initialized directly for a terminal rule $X\to a$, and updated from the two children in 
$O(1)$ time
for a binary rule $X\to YZ$. 
\begin{align*}
  |\val(X)| &= |\val(Y)|+|\val(Z)|,\\
  h(X) &= 1+\max\{h(Y),h(Z)\},\\
  \fp(X) &= (\fp(Y)\cdot \pow(Z)+\fp(Z)) \bmod p,\\
  \pow(X) &= \pow(Y)\cdot \pow(Z) \bmod p.
\end{align*}
The rule is valid only when $|h(Y)-h(Z)|\le 1$.
Symbols are immutable, and a version stores root symbols for the strings in that version.

\subsection{Closure under Split and Concatenation}

The basic operation is Rytter's AVL-grammar concatenation.
Given two AVL grammars rooted at $X$ and $Y$, it creates an AVL grammar for $\val(X)\val(Y)$.
If the two heights differ by at most one, it creates the single rule $Z\to XY$.
If one side is much taller, the algorithm descends along the appropriate spine of the taller derivation tree, inserts the shorter grammar at the first height-compatible position, and then rebuilds and rebalances on the way back.

\begin{lemma}\label{lem:grammarconcat}
Let $X$ and $Y$ be roots of AVL grammars deriving strings of total length $n$.
One can create an AVL grammar root for $\val(X)\val(Y)$ in $\Oh(\log n)$ time using $\Oh(\log n)$ new symbols.
The input grammars are not modified.
\end{lemma}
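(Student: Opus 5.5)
Assume by symmetry that $h(X)\ge h(Y)$ and argue by induction on $h(X)-h(Y)$, following Rytter's concatenation but treating every rule as immutable. If $h(X)-h(Y)\le 1$, we create the single rule $Z\to XY$; it is valid because the two heights differ by at most one, and its fields are computed from those of $X$ and $Y$ in $\Oh(1)$ time. Otherwise $X\to X_1X_2$ is binary and $h(X_2)\ge h(X)-2\ge h(Y)$. We recursively build $W$ with $\val(W)=\val(X_2)\val(Y)$ and then combine $X_1$ with $W$. The key invariant, proved inductively, is that the recursive call returns a root $W$ with $h(X_2)\le h(W)\le h(X_2)+1$. Since every step creates fresh symbols and never overwrites a rule, old symbols remain intact. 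Consequently the input grammars, and every version sharing their symbols, are unchanged.

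The combining step is the AVL rebalancing case analysis. If $|h(X_1)-h(W)|\le 1$, we create $Z\to X_1W$. The only remaining case is $h(W)=h(X_1)+2$, which requires $h(X_2)=h(X_1)+1$ and $h(W)=h(X_2)+1$. In that case $W$ is a fresh binary rule $W\to W_1W_2$, and we perform a single or double rotation by creating new rules. If $h(W_1)\le h(W_2)$, we form $Z\to (X_1W_1)W_2$. If $h(W_1)=h(W_2)+1$, we unfold $W_1\to U_1U_2$ and form $Z\to (X_1U_1)(U_2W_2)$.

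In each case we check, as for AVL trees, that every new rule satisfies the height-balance condition and that the root height is $h(X)$ or $h(X)+1$, which restores the invariant. The newly created intermediate rules such as $X_1W_1$ are fresh symbols, so the old $W$, or the old $W_1$, may be left in the DAG unused. \textbf{This rotation case analysis is the main obstacle.} The concern is that in a DAG the child symbols $W_1$ and $U_i$ may be shared. Sharing does no harm, because we only read them and never modify them. The verification is the standard AVL join analysis transcribed to heights $h(\cdot)$, and I would write out the three subcases explicitly.

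The cost follows from the recursion structure. The recursion descends the right spine of $X$ for at most $h(X)-h(Y)$ levels, and each level does $\Oh(1)$ work and creates $\Oh(1)$ new symbols (at most three). An AVL grammar deriving a string of length $m$ has height $\Oh(\log m)$, because its unfolded derivation tree is an AVL tree with $m$ leaves. Hence $h(X)=\Oh(\log n)$, and both the time and the number of new symbols are $\Oh(\log n)$.
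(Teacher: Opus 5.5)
Your proposal is correct and takes the same approach as the paper: Rytter's concatenation as an immutable AVL join down the spine of the taller grammar, creating $\Oh(1)$ fresh symbols per level, with $h(X)=\Oh(\log n)$ bounding the number of levels. The paper only cites Rytter for this, whereas you write out the height invariant $h(X_2)\le h(W)\le h(X_2)+1$ and the single/double rotation cases, and those cases check out.
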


\begin{proof}
This is the standard AVL-grammar concatenation procedure of Rytter~\cite{rytter2003application}, which is the grammar analogue of AVL-tree concatenation.
The height of every AVL derivation tree deriving a string of length at most $n$ is $\Oh(\log n)$, so the affected spine and the rebalancing path have logarithmic length.
All fields stored at newly created symbols are recomputed in constant time per symbol.
\qed
\end{proof}

We next use the interval-decomposition operation that already appears in Rytter's construction.
When processing a new LZ factor, Rytter locates its previous occurrence in the already constructed prefix grammar and decomposes the corresponding interval into nonterminals
$S_1, S_2,\ldots,S_t $
whose derived strings concatenate to the factor.
Because the grammar is AVL-balanced, this decomposition has $t=\Oh(\log n)$ fragments; moreover, the fragment heights have the right-sided/left-sided, equivalently bitonic, structure used in Rytter's analysis.
Joining this sequence telescopes: a concatenation of two grammars adds symbols proportional to their height difference, and along a bitonic sequence these differences sum to $\Oh(\log n)$.
Thus the fragments can be concatenated into one AVL grammar by adding only $\Oh(\log n)$ new symbols, not $\Oh(\log^2 n)$.
This argument is independent of the LZ origin of the interval: once an AVL grammar is given, the same top-down decomposition applies to any substring interval.

\begin{lemma}\label{lem:grammarsplit}
Let $X$ be the root of an AVL grammar deriving a string of length $n$.
For any $0\le i\le n$, one can create AVL grammar roots for $\val(X)[1..i]$ and $\val(X)[i+1..n]$ in $\Oh(\log n)$ time using $\Oh(\log n)$ new symbols.
The input grammar is not modified.
\end{lemma}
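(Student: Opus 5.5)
We prove Lemma~\ref{lem:grammarsplit} by a top-down decomposition of the derivation of $X$ at position $i$, followed by a bottom-up rejoin of the pieces on each side using Lemma~\ref{lem:grammarconcat}. The whole argument is the grammar analogue of the classical AVL-tree split. The cases $i=0$ and $i=n$ are trivial: return the empty string and $X$ itself.

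\emph{Decomposition.} Walk from $X$ toward the leaf at position $i$. Because the grammar is immutable and every symbol stores $|\val(\cdot)|$, this walk is a read-only rank search. At a symbol $W\to YZ$:
\begin{itemize}
\item if $i$ falls inside $\val(Y)$, record $Z$ as a right fragment and descend into $Y$;
\item otherwise record $Y$ as a left fragment, subtract $|\val(Y)|$ from $i$, and descend into $Z$.
\end{itemize}
Stop when the boundary coincides with a symbol boundary. This yields existing symbols $L_1,\dots,L_a$ with $\val(L_1)\cdots\val(L_a)=\val(X)[1..i]$ and $R_b,\dots,R_1$ for the suffix. No new symbols are created in this phase. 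Since $h(X)=\Oh(\log n)$ for an AVL grammar, we have $a+b=\Oh(\log n)$ and the walk costs $\Oh(\log n)$ time.

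\emph{Rejoin.} The left fragments are collected in order of decreasing height, because each is a sibling at a strictly deeper level on the path. Moreover, consecutive fragments satisfy $h(L_{k+1})\le h(L_k)$, and AVL balance gives $h(L_k)\in\{d_k-1,d_k-2\}$, where $d_k$ is the height of the path node at which $L_k$ was cut off.

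We join from the innermost (shortest) fragment outward: $T\gets L_a$, then $T\gets\concatop(L_{a-1},T)$, and so on. The right side is handled symmetrically. By Lemma~\ref{lem:grammarconcat}, each step produces a valid AVL grammar without modifying its inputs. The crucial refinement of that lemma is that Rytter's concatenation of grammars of heights $h_1\ge h_2$ costs $\Oh(h_1-h_2+1)$ time and new symbols. The reason is that it descends the spine of the taller grammar only to depth $h_1-h_2$ and rebuilds only that stretch; the join result has height at most $h_1+1$.

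\emph{Telescoping, the main obstacle.} We must show that the total cost of the rejoin is $\Oh(\log n)$, not $\Oh(\log^2 n)$. Let $T_k$ denote the accumulated grammar after joining $L_k,\dots,L_a$. We show by induction that $h(T_k)\le h(L_k)+1$. This is the invariant from the classical AVL split analysis: joining $L_k$ with a tree of height at most $h(L_{k+1})+1\le h(L_k)+1$ yields height at most $h(L_k)+2$. If that overshoots, we instead track the path heights $d_k$ and use the weaker invariant $h(T_k)\le d_k$. Both versions rely on the gaps $d_k-d_{k+1}\ge1$.

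Given the invariant, the cost of step $k$ is $\Oh(|h(L_k)-h(T_{k+1})|+1)=\Oh(d_k-d_{k+1}+1)$. Summing over $k$ telescopes to $\Oh(h(X)+a)=\Oh(\log n)$.

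I expect getting the height invariant exactly right to be the delicate point. In particular, when the join takes the ``heights differ by at most one'' branch, the result can be one taller than both inputs. The plan is to copy Tarjan's AVL split-cost argument verbatim, replacing nodes by symbols. Combined with the $\Oh(\log n)$ decomposition, this gives the claimed time and symbol bounds, and the inputs are untouched because every step only creates fresh symbols.
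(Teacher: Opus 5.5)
Your proposal is correct and follows the paper's route. Like the paper, you do a read-only top-down decomposition into $\Oh(\log n)$ existing symbols and then rejoin them so that the height-difference costs of Rytter's concatenation telescope; the paper attributes this step to Rytter's right-sided/left-sided analysis instead of spelling it out. Of your two invariants only $h(T_k)\le d_k$ is valid, since $h(T_k)\le h(L_k)+1$ can fail by one. The valid one follows from $h(T_k)\le\max\{d_k-1,d_{k+1}\}+1$, and together with $h(T_{k+1})\ge h(L_{k+1})\ge d_{k+1}-2$ it yields exactly your per-step bound $\Oh(d_k-d_{k+1}+1)$.
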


\begin{proof}
Apply Rytter's interval decomposition to the two intervals $[1,i]$ and $[i+1,n]$.
Each interval is represented as a sequence of $\Oh(\log n)$ existing grammar roots whose derived strings concatenate to that interval.
By the same right-sided/left-sided height analysis used by Rytter for concatenating the pieces of an LZ factor, each sequence can be turned into one AVL grammar root with $\Oh(\log n)$ new symbols and time.
The original grammar is immutable and is only referenced by the newly created symbols.
\qed
\end{proof}

Lemmas~\ref{lem:grammarconcat} and~\ref{lem:grammarsplit} hold for arbitrary AVL grammars; Rytter's construction is needed only for the initial size bound (Theorem~\ref{thm:rytter}).

\subsection{Persistent AVL Grammars}

We implement all grammar operations functionally, with every symbol immutable.
Split, concatenation, insertion, deletion, and substitution create new symbols and reuse all unaffected old symbols.
Therefore, every old version remains valid automatically, and any old root can be used as the input to a new update.

Substring equality and LCE follow Secs.~\ref{sec:dynamic} and~\ref{sec:persistent}, substituting grammar roots for tree roots.
Temporary grammar splits yield range fingerprints, and the LCE algorithm in Sec.~\ref{subsec:lce} uses the same 
strategy by Lipt\'{a}k et al.



\begin{theorem}\label{thm:grammarpersistent}
Suppose that the initial strings are represented by AVL grammars of total size $g_0$.
There is a fully persistent grammar-compressed dynamic LCE data structure supporting split, concatenate, and single-character updates in $\Oh(\log n)$ worst-case time, equality in $\Oh(\log n)$ worst-case time \whp, and LCE in $\Oh(\log n+\log^2\ell)$ worst-case time \whp
After updates $u=1,\ldots,U$, the total number of permanent grammar symbols is $\Oh(g_0+I+U\log n_{\max})$, where $I$ is the number of explicitly inserted fresh characters and $n_{\max}$ is the maximum string length appearing during the update sequence.
\end{theorem}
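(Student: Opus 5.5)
The plan is to reduce every operation to the two grammar primitives, Lemma~\ref{lem:grammarconcat} (concatenation) and Lemma~\ref{lem:grammarsplit} (split). Each primitive runs in $\Oh(\log n)$ time and creates $\Oh(\log n)$ new symbols without modifying its inputs. After that, we repeat the time analysis of Sec.~\ref{sec:dynamic} and the persistence argument of Sec.~\ref{sec:persistent} with grammar roots in place of tree roots.

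\emph{Updates.} Split and concatenate are exactly the two lemmas. A single-character update is a constant number of primitives. Insertion of $a$ at position $i$ is: split at $i$, create a terminal rule $X\to a$, then two concatenations. Deletion is two splits and one concatenation. Substitution is a deletion followed by an insertion. Each operation therefore takes $\Oh(\log n)$ time and creates $\Oh(\log n)$ new symbols, plus at most one fresh terminal symbol per inserted character. All intermediate strings have length at most $n+1$, so the logarithms stay $\Oh(\log n)$. Insertion of a whole string $w$ can be charged as $\makestring(w)$, which uses $\Oh(|w|)$ symbols and is counted in $I$, followed by $\Oh(1)$ primitives.

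\emph{Persistence.} Every symbol is immutable and its fields are fixed at creation. By induction on the version tree, the grammar reachable from any old root is never altered. So any old version remains a valid AVL grammar and can be the input to a new update. The validity of the AVL condition for new symbols is guaranteed by the lemmas.

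\emph{Queries.} A range fingerprint of $\val(X)[i..j]$ is obtained as in Lemma~\ref{lem:rangefp}. We apply Lemma~\ref{lem:grammarsplit} twice and read $\fp$ at the middle root, in $\Oh(\log n)$ time. This gives equality in $\Oh(\log n)$ time, correct \whp by the Karp--Rabin argument. For LCE, we run the algorithm of Lipt\'{a}k et al.\ exactly as in Lemma~\ref{lem:lce}. Its only primitives are equality, split, concatenate, extract and introduce, and each runs in worst-case time logarithmic in the size of the involved strings. Hence the bound $\Oh(\log n+\log^2\ell)$ transfers verbatim. The symbols created by queries are temporary and are discarded: no root of any retained version references them, since queries produce no version.

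\emph{Space accounting.} We sum over $u=1,\dots,U$. Update $u$ creates $\Oh(\log n_u)+(\text{fresh characters})$ permanent symbols, which totals $\Oh(I+\sum_u\log n_u)\le\Oh(I+U\log n_{\max})$. Adding the $g_0$ initial symbols gives the bound.

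The main obstacle is the LCE step. One must check that the window extractions and reinsertions of the Lipt\'{a}k et al.\ algorithm act on strings whose length is bounded by the current window, not by $n$. This is what keeps the $\log^2\ell$ term independent of $n$. The grammar lemmas are stated for the total length of the operands, so this transfer works exactly as for trees. The bound depends on the lemmas' costs being worst-case in the operand lengths, and they are.
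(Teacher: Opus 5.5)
Your proposal is correct and follows the same route as the paper's proof. Both reduce split, concatenate, and single-character updates to Lemmas~\ref{lem:grammarconcat} and~\ref{lem:grammarsplit} plus $\Oh(1)$ terminal creations, keep old versions valid through immutability, take the query bounds from the analyses of Lemmas~\ref{lem:rangefp} and~\ref{lem:lce}, and count $g_0$ initial symbols, the fresh terminals in $I$, and $\Oh(\log n_u)$ new binary symbols per update, with your write-up simply more explicit (the concrete insertion/deletion decompositions and the discarding of query-time symbols).
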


\begin{proof}
The correctness invariant is that every version root is the root of an AVL grammar deriving the corresponding string.
It holds for the initial roots by assumption.
It is preserved by split and concatenation by Lemmas~\ref{lem:grammarsplit} and~\ref{lem:grammarconcat}; insertions, deletions, and substitutions are compositions of these operations and a constant number of terminal creations.
The time bounds follow from the same lemmas and, for LCE, from the analysis of Lemma~\ref{lem:lce} applied to the grammar primitives.
The space bound counts the initial symbols, the explicitly inserted terminals, and $\Oh(\log n_u)$ new binary symbols per update.
\qed
\end{proof}

\begin{corollary}\label{cor:mininitial}
For an initial string of length $n_0$ whose LZ77 factorization has size $z_0$, the initial representation may be chosen to have size $\Oh(\min\{n_0,z_0\log n_0\})$.
Consequently, if $n_{\max}$ bounds the lengths of all strings involved in the updates, the permanent grammar size is $\Oh(\min\{n_0,z_0\log n_0\}+I+U\log n_{\max})$.
\end{corollary}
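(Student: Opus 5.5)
The plan is to build two candidate initial representations and keep the smaller one, then feed that choice into Theorem~\ref{thm:grammarpersistent} with $g_0$ equal to its size.

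\emph{First candidate: the uncompressed string as an AVL grammar.} I will build a perfectly balanced derivation tree for the string, as in $\makestring$. Every character position gets its own terminal rule, and every internal node of the balanced tree gets a binary rule whose children have heights differing by at most one. This is an AVL grammar with $\Oh(n_0)$ symbols, and it takes $\Oh(n_0)$ time to construct. One could also deduplicate terminals to one per alphabet letter, but this is not needed for the bound.

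\emph{Second candidate: Rytter's grammar.} Theorem~\ref{thm:rytter} gives an AVL grammar of size $\Oh(z_0\log n_0)$ deriving the same string.

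\emph{Choosing between them.} Both sizes are known in advance up to constants: $n_0$ is known, and $z_0$ is known from the given factorization. So we can compare $n_0$ against $z_0\lceil\log n_0\rceil$ and build only the smaller candidate. Alternatively, we can build both and discard the larger. Either way the initial grammar has size $g_0=\Oh(\min\{n_0,z_0\log n_0\})$.

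\emph{Applying the theorem.} Theorem~\ref{thm:grammarpersistent} requires only that the initial roots form an AVL grammar, and it holds for arbitrary AVL grammars, as remarked after Lemma~\ref{lem:grammarsplit}. It therefore applies to whichever candidate was chosen, giving a permanent size of $\Oh(g_0+I+U\log n_{\max})$. Substituting $g_0$ yields the stated bound.

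The step that needs the most care is the hypothesis on $n_{\max}$. The theorem's space bound charges $\Oh(\log n_u)$ new symbols to update $u$. Since $n_u\le n_{\max}$ by assumption and logarithms are taken to be at least one, this is at most $\Oh(\log n_{\max})$ per update, and summing over the $U$ updates gives the $U\log n_{\max}$ term. Beyond this, the corollary is a direct substitution into Theorem~\ref{thm:grammarpersistent}.
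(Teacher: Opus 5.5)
Your proposal is correct and matches the paper's proof: take the smaller of the $\Oh(n_0)$ uncompressed AVL grammar and Rytter's $\Oh(z_0\log n_0)$ grammar (Theorem~\ref{thm:rytter}) as the initial representation of size $g_0$, then substitute into Theorem~\ref{thm:grammarpersistent}. Your leaf-oriented halving construction is a slightly more careful way to justify the paper's claim that ``the uncompressed AVL tree is an AVL grammar,'' since the trees of Sec.~\ref{sec:dynamic} store characters at internal nodes; and if $n_u$ is a sum of two string lengths, each at most $n_{\max}$, the bound $\log(2n_{\max})=\Oh(\log n_{\max})$ still gives the $U\log n_{\max}$ term.
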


\begin{proof}
The uncompressed AVL tree is an AVL grammar of size $\Oh(n_0)$.
Rytter's construction gives an AVL grammar of size $\Oh(z_0\log n_0)$.
We choose the smaller of the two as the initial representation and then apply Theorem~\ref{thm:grammarpersistent}.
\qed
\end{proof}

We remark that 
this bound does not guarantee a size bounded by the LZ77 size $z_t$ of later versions.
Without rebuilding or dynamic parsing, the space depends solely on the initial grammar and the number of updates.


\section{Experiments}\label{sec:experiments}

We empirically evaluate \textal{FeAVL}, the AVL-tree structure of Secs.~\ref{sec:dynamic} and~\ref{sec:persistent}, measuring running time and peak memory against the string length $n$ and corroborating Theorems~\ref{thm:dynamic} and~\ref{thm:persistent} and Proposition~\ref{prop:splaybad}.
All plots vary one parameter on log--log axes, so the slope separates $\Oh(\log n)$ from $\Oh(n)$.
Each data point is the total wall-clock time of a fixed number $Q$ of queries, averaged over five runs on a single fixed-seed input.
Operands are uniform random, except for equality and the worst case, which use constructed inputs.

The experiments were run on Ubuntu 22.04.5 LTS (64-bit) with $64.0$\,GiB of memory and an AMD Ryzen Threadripper 3990X, compiled with \texttt{g++ 12.3.0}.
We implement three approaches in C++ and compare them: the naive array, where $\access$ costs $\Oh(1)$ and the other operations $\Oh(n)$ (with $\equalop$ in $\Oh(\ell)$); the splay-based FeST of~\cite{liptak2026textbook} and its path-copied persistent version; and our \textal{FeAVL}.

\subsection{Dynamic Operations}\label{ssec:dynamic}

With $Q = 10^4$ queries, extraction and introduction, a composition of split and concatenate, scale logarithmically for both trees and linearly for the naive array; the trees are faster beyond $n \sim 10^5$ (Figure~\ref{fig:dyn-eai}), consistent with Theorem~\ref{thm:dynamic}.
For equality we fix $n = 2\times10^6$ and vary the match length $\ell$ over a fully matching range.
\textal{FeAVL} and FeST stay nearly constant in $\ell$ (about $0.25$\,s and $12$\,ms, respectively; $\Oh(\log n)$ by fingerprints), while the naive array grows linearly to about $2.4$\,s at $\ell = 10^6$ ($\Oh(\ell)$), consistent with Corollary~\ref{cor:equal} (Figure~\ref{fig:dyn-equal}); persistent equality behaves identically.

\begin{figure}[t]
  \centering
  \begin{subfigure}{0.49\linewidth}
    \centering
    \includegraphics[width=\linewidth]{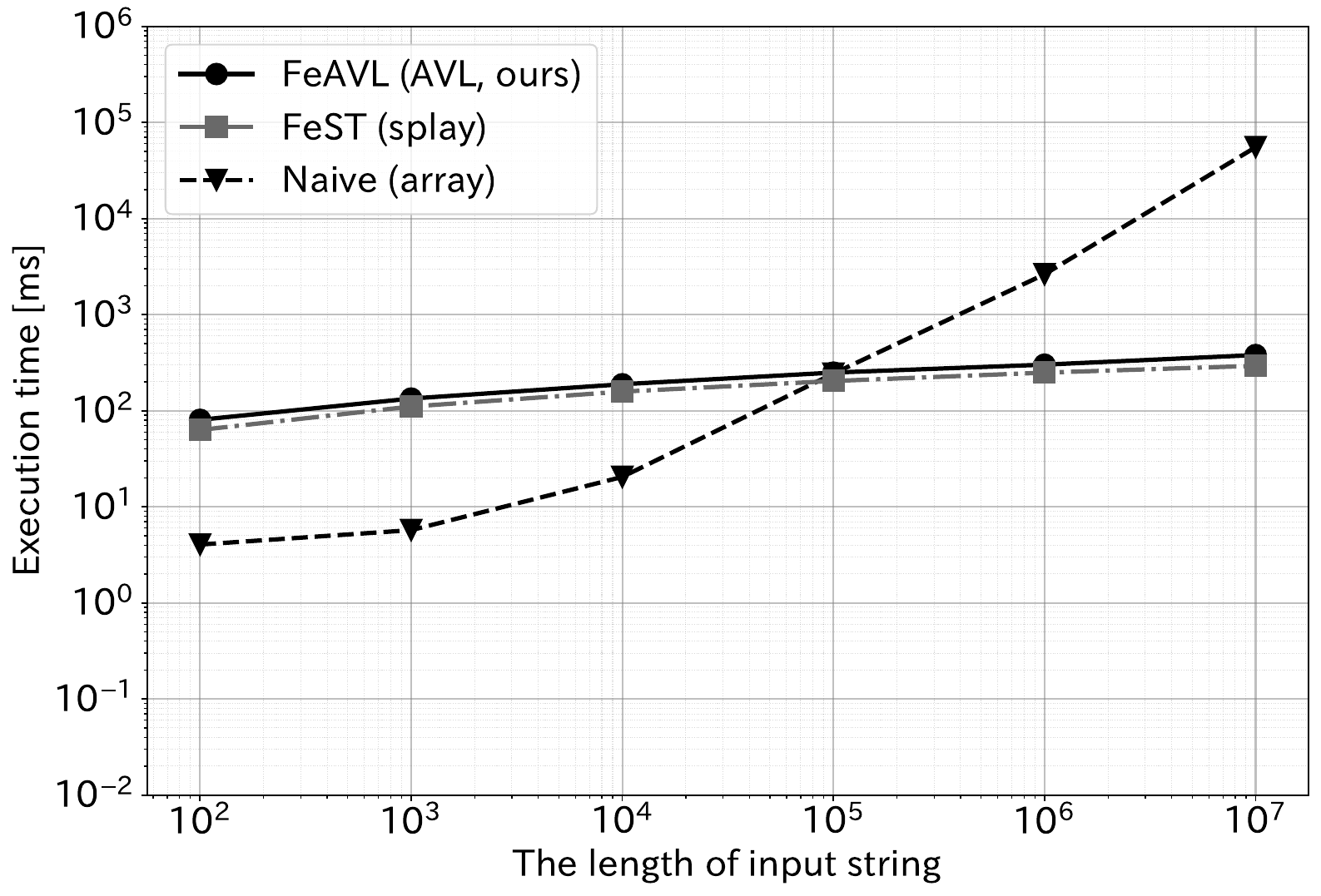}
    \caption{}\label{fig:dyn-eai}
  \end{subfigure}
  \hfill
  \begin{subfigure}{0.49\linewidth}
    \centering
    \includegraphics[width=\linewidth]{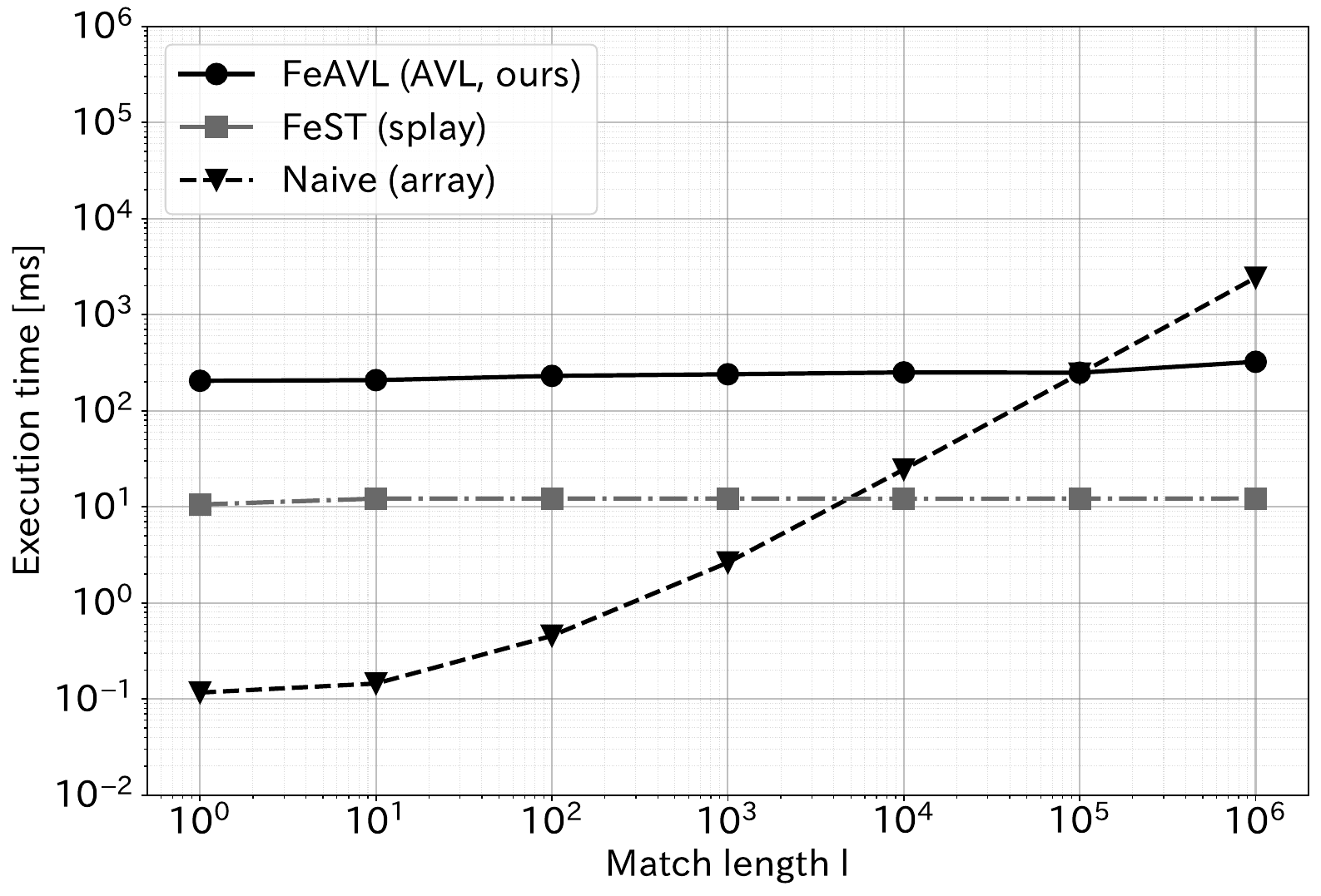}
    \caption{}\label{fig:dyn-equal}
  \end{subfigure}
  \caption{Dynamic operations: (a) extraction and introduction vs.\ the string length $n$; (b) equality vs.\ the match length $\ell$ at $n = 2\times10^6$.}
  \label{fig:dyn}
\end{figure}

\subsection{Persistent Operations}\label{ssec:persistent}

With $Q = 10^4$ queries, persistent introduction scales logarithmically for both trees and linearly for the naive per-version copy; the crossover is near $n \sim 2\times10^4$ (Figure~\ref{fig:per-time}), consistent with Theorem~\ref{thm:persistent}.
For space we issue $Q = K = 10^5$ introductions, retain every resulting version, and vary $n$: per version the trees use $\Oh(\log n)$ and the naive array $\Oh(n)$, so the totals are $\Oh(n_0 + I + U\log n_{\max})$ and $\Oh(K n)$, respectively (Sec.~\ref{sec:persistent}).
At $n = 10^5$ the naive array reaches about $20$\,GB while the trees use about $0.5$\,GB (Figure~\ref{fig:per-space}); the per-node overhead makes the trees heavier only up to $n \sim 10^3$.

\begin{figure}[t]
  \centering
  \begin{subfigure}{0.49\linewidth}
    \centering
    \includegraphics[width=\linewidth]{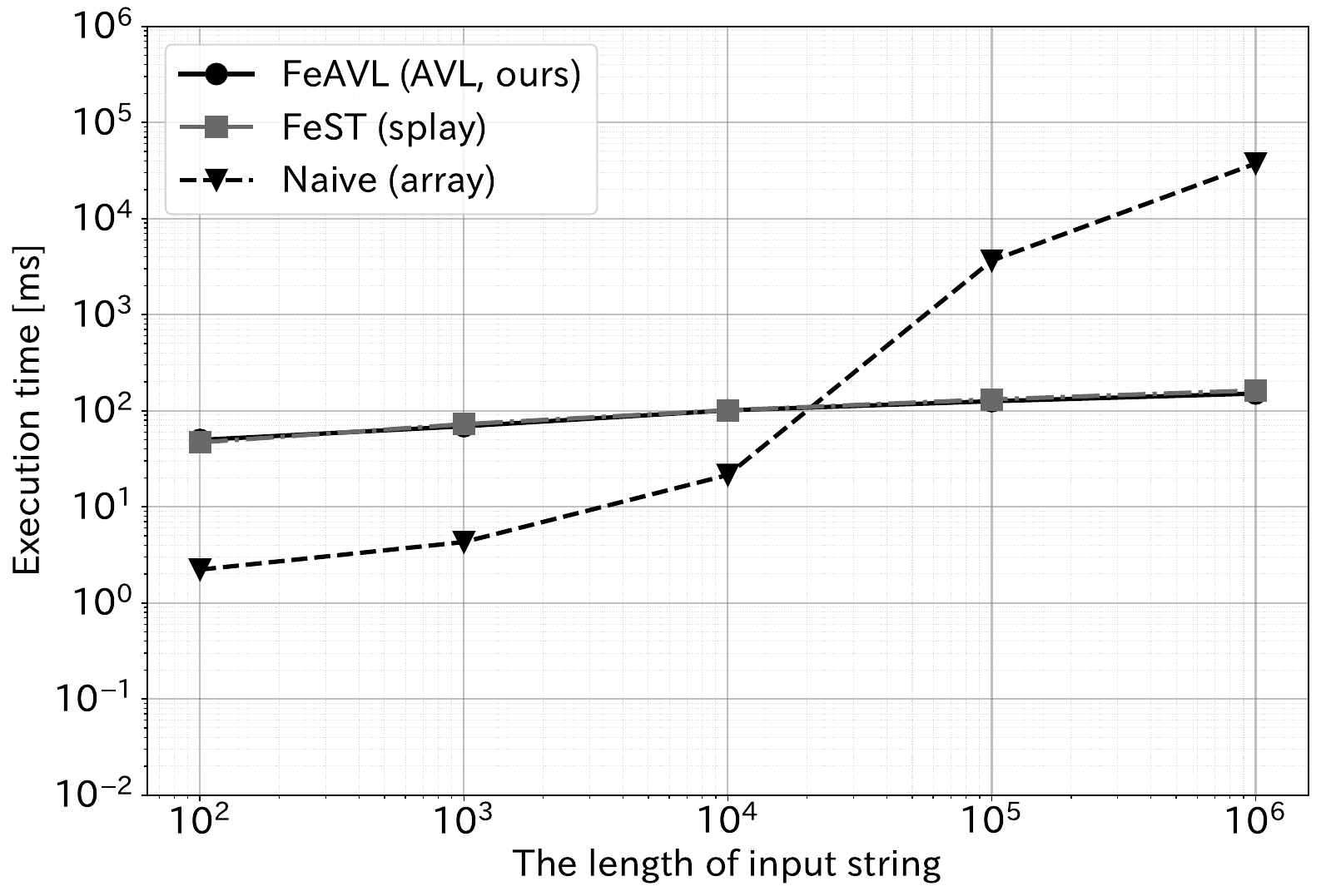}
    \caption{}\label{fig:per-time}
  \end{subfigure}
  \hfill
  \begin{subfigure}{0.49\linewidth}
    \centering
    \includegraphics[width=\linewidth]{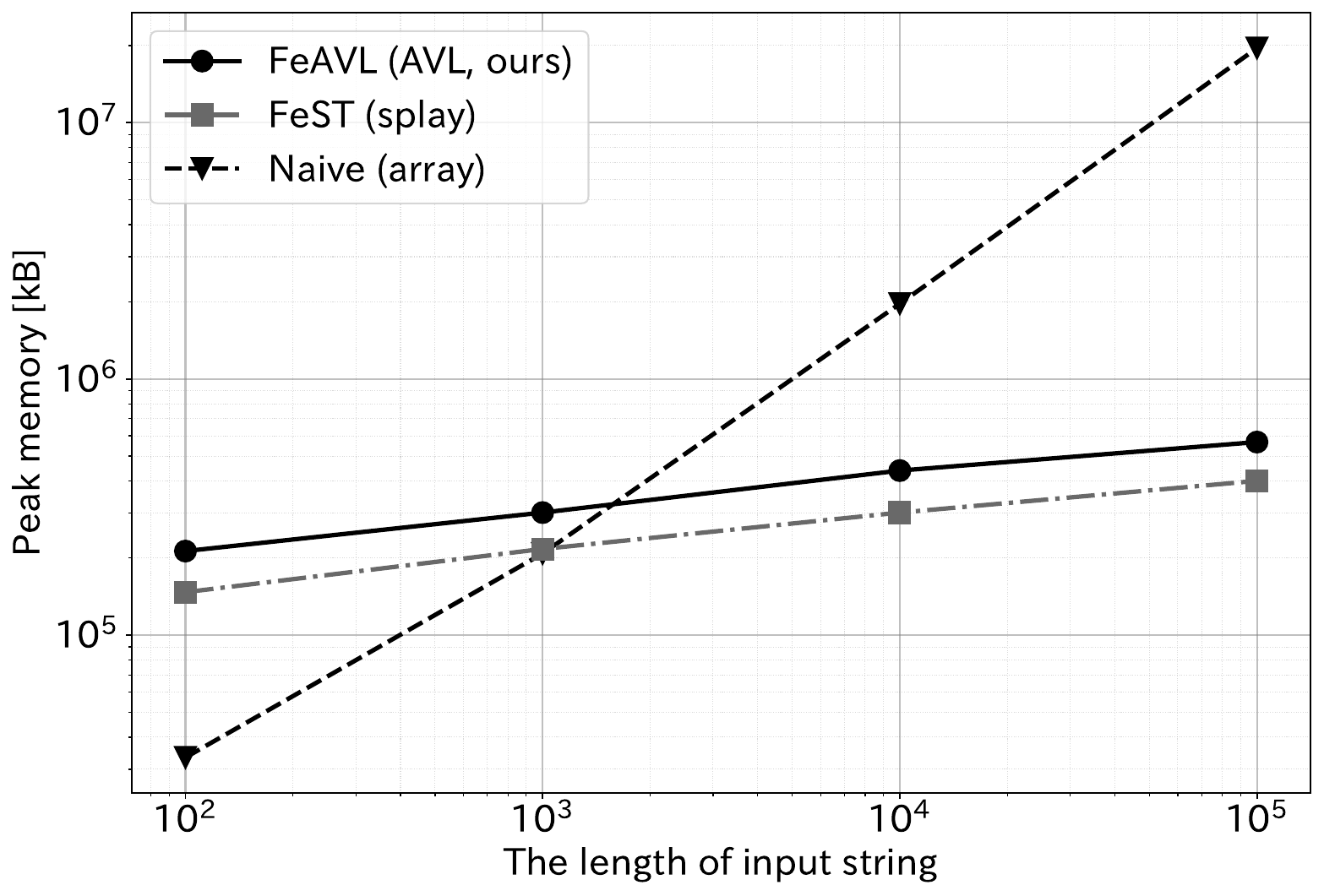}
    \caption{}\label{fig:per-space}
  \end{subfigure}
  \caption{Persistent operations vs.\ the string length $n$: (a) introduction time; (b) peak memory with $K = 10^5$ retained versions.}
  \label{fig:per}
\end{figure}

\subsection{Worst Case of Splay-Based FeST under Persistence}\label{ssec:worst}

To confirm that our \textal{FeAVL} avoids the persistent worst-case degradation of the splay-based \textal{FeST}, we repeatedly apply 
the access pattern 
that triggers it.
We first force the tree into a path by accessing positions in ascending order, 
and then issue $Q = 10^3$ accesses to the deepest position $0$ while varying the size $n$ (Figure~\ref{fig:worst}).
This reflects Proposition~\ref{prop:splaybad}: discarding the splayed result leaves the base version as a path. 
Consequently, the execution time of \textal{FeST} scales linearly ($\sim 42$\,s at $n = 10^5$) 
generating $\Theta(n)$ new nodes, 
whereas \textal{FeAVL} takes near-constant time ($0.2$\,ms) and the naive array takes $\Oh(1)$.



\begin{figure}[t]
  \centering
  \includegraphics[width=0.49\linewidth]{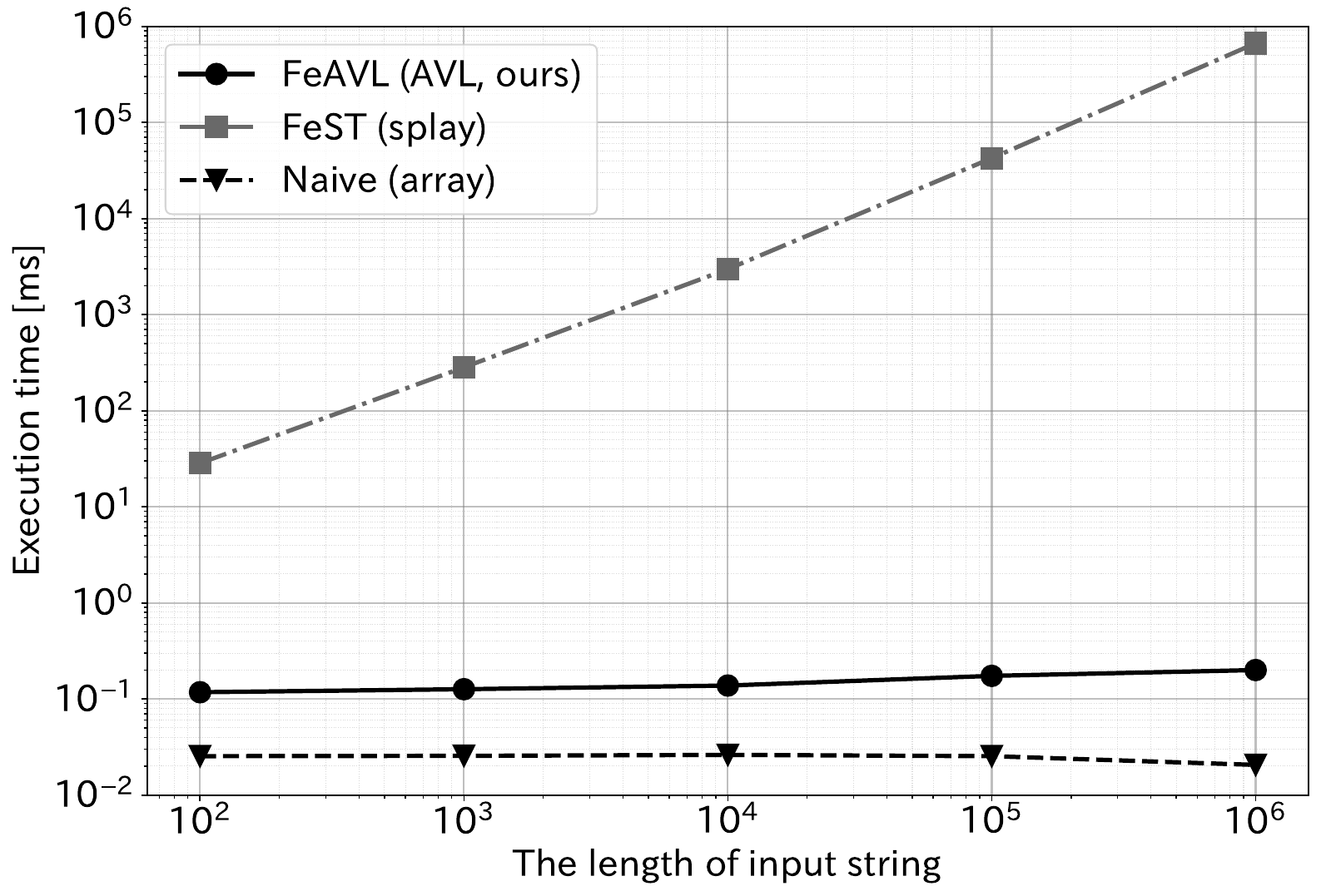}
  \caption{Worst case under persistence: access to the deepest element of a path-shaped version vs.\ the string length $n$.}
  \label{fig:worst}
\end{figure}

\section{Conclusions}\label{sec:conclusion}


We presented \textal{FeAVL} and its grammar-compressed variant to make the FeST paradigm~\cite{liptak2026textbook} fully persistent by replacing splay trees with worst-case balanced structures. This achieves worst-case (near-)logarithmic time for dynamic operations and LCE queries, with the compressed version 
yielding a total size of $\Oh(g_0+I+U \log n_{\max})$, with initial size $g_0=\Oh(\min\{n_0,z_0\log n_0\})$.

Future work includes applying our data structure to large-scale real-world versioned documents like software repositories and genome databases~\cite{makinen2023genome}. A particularly interesting application in bioinformatics is simulating \textit{pangenomic evolution} and \textit{phylogenetic trees}, which models mutations and sequence recombinations (via split and concatenate) while maintaining full historical states~\cite{equi2023algorithms}.





\newpage

\bibliographystyle{splncs04}
\bibliography{ref}

\end{document}